\documentclass[11pt,a4paper]{article}

\usepackage[utf8]{inputenc}
\usepackage[T1]{fontenc}
\usepackage{geometry}
\usepackage{amsmath,amssymb,amsthm}
\usepackage{graphicx}
\usepackage{hyperref}
\usepackage{tikz}
\usetikzlibrary{positioning,calc,arrows.meta}
\usepackage{authblk}
\usepackage{cite}
\usepackage[section]{placeins}

\newtheorem{theorem}{Theorem}[section]

\newtheorem{proposition}[theorem]{Proposition}

\theoremstyle{definition}
\newtheorem{definition}{Definition}[section]
\newtheorem{example}{Example}[section]
\theoremstyle{remark}
\newtheorem{remark}{Remark}[section]

\title{\textbf{Topological Relational Theory: A Simplicial-Complex View of Functional Dependencies, Lossless Decomposition, and Acyclicity}}

\author[1]{Bilge Senturk}
\author[1]{Faruk Alpay}
\affil[1]{Bahcesehir University, Department of Computer Engineering, Istanbul, Turkey\\
\texttt{\{bilge.senturk, faruk.alpay\}@bahcesehir.edu.tr}}

\date{} 

\begin{document}
\maketitle

\begin{abstract}
We develop a topological lens on relational schema design by encoding functional dependencies (FDs) as simplices of an abstract simplicial complex. This \emph{dependency complex} exposes multi-attribute interactions and allows the use of homological invariants (Betti numbers) to diagnose cyclic dependency structure. We define \emph{Simplicial Normal Form} (SNF) as \emph{homological acyclicity} of the dependency complex in positive dimensions (vanishing reduced homology for $n\ge 1$). SNF is deliberately weaker than contractibility and does \emph{not} conflate homology with homotopy. For decompositions, we give a topological reformulation of the classical binary lossless-join criterion: under dependency preservation, a decomposition is lossless exactly when the intersection attributes form a key for at least one component; topologically, this yields a strong deformation retraction that trivializes the relevant Mayer--Vietoris boundary. For multiway decompositions, we show how the \emph{nerve} of a cover by induced subcomplexes provides a computable certificate: a cycle in the nerve (detected by $H_1$) is an obstruction to join-tree structure and aligns with cyclic join behavior in acyclic-scheme theory. Finally, we discuss an algorithmic consequence: Betti numbers of the dependency complex (or of a decomposition nerve) can be computed from boundary matrices and used as a lightweight schema diagnostic to localize ``unexplained'' dependency cycles, complementing standard FD-chase tests.
\end{abstract}

\noindent\textbf{Keywords:} functional dependencies; normalization; lossless join; acyclic database schemes; simplicial complex; homology; nerve.

\section{Introduction}

Functional dependencies (FDs) are central to relational database design: an FD $X\to Y$ asserts that equality on attributes $X$ forces equality on $Y$ \cite{armstrong1974}. Normal forms (3NF, BCNF, \dots) use FDs to reduce redundancy and anomalies \cite{maier1983,ullman1988}. Decomposition is a principal tool, but designers must balance \emph{lossless join} (no information loss under join) and \emph{dependency preservation} (FDs remain enforceable without re-joining) \cite{aho1979,beeri1979}.

This paper introduces a geometric encoding of FDs that supports homological diagnostics of schema structure. The core idea is simple: treat attributes as vertices and treat each FD as a simplex spanning its determinant and dependent attributes. Cyclic interactions then appear as cycles in the complex; whether those cycles are ``filled'' by higher-dimensional simplices is detected by homology.

\paragraph{Contributions.}
\begin{itemize}
  \item We define the \emph{dependency complex} $K_F$ induced by a canonical cover of FDs (Section~\ref{sec:depcomplex}).
  \item We define \emph{Simplicial Normal Form (SNF)} as vanishing reduced homology $\widetilde H_n(K_F)=0$ for all $n\ge 1$ (Section~\ref{sec:snf}). We explicitly avoid the incorrect implication ``homology acyclic $\Rightarrow$ contractible.''
  \item We state and prove a topological restatement of the classical \emph{binary} lossless-join criterion under dependency preservation (Section~\ref{sec:lossless}).
  \item We show a \emph{strictly different} diagnostic compared to directed-cycle checks: directed FD cycles may exist while $H_1(K_F)=0$ because higher-arity dependencies fill the cycle (Example~\ref{ex:difference}).
  \item We give an algorithmic consequence: compute Betti numbers of $K_F$ (or of the nerve of a decomposition cover) to localize and quantify cyclic join structure (Section~\ref{sec:alg}).
\end{itemize}

\section{Preliminaries}

\subsection{Relational Schemas and Lossless Join}

A relational schema is $R(U,F)$ where $U$ is a finite set of attributes and $F$ is a set of FDs. The closure $X^+$ is the set of attributes implied by $X$ under $F$ \cite{armstrong1974}. A decomposition of $R(U)$ into $R_1(U_1),\dots,R_k(U_k)$ is \emph{lossless} if for every instance $r$ that satisfies $F$, we have
\[
r = \Join_{i=1}^k \pi_{U_i}(r).
\]
It is \emph{dependency-preserving} if $(\bigcup_i F_i)^+ = F^+$, where each $F_i$ is the projection of $F$ onto $U_i$ \cite{maier1983}.

For the important binary case $k=2$, a standard criterion states the decomposition into $U_1,U_2$ is lossless (w.r.t.\ $F$) iff
\begin{equation}\label{eq:lossless-classic}
(U_1\cap U_2)\to U_1 \quad\text{or}\quad (U_1\cap U_2)\to U_2 \quad \text{is in } F^+.
\end{equation}
See \cite{aho1979,maier1983}.

\subsection{Simplicial Complexes and Homology}

An abstract simplicial complex $K$ on vertex set $V$ is a family of finite nonempty subsets of $V$ closed under taking nonempty subsets \cite{hatcher2002}. We write $H_n(K)$ for simplicial homology and $\widetilde H_n(K)$ for reduced homology. The Betti number is $b_n=\mathrm{rank}\,H_n(K)$ (over a field; equivalently rank of the free part over $\mathbb{Z}$). A complex is \emph{homologically acyclic in positive degrees} if $\widetilde H_n(K)=0$ for all $n\ge 1$. Importantly, vanishing homology does \emph{not} imply contractibility or a specific homotopy type in general.

\section{Dependency Complex and Simplicial Normal Form}\label{sec:depcomplex}

\subsection{Canonical Covers}

We assume $F$ is represented by a canonical (minimal) cover $F_c$ in the standard sense \cite{maier1983}: no FD has extraneous attributes and redundant FDs are removed.

\subsection{Dependency Complex}

\begin{definition}[Dependency complex]
Let $R(U,F)$ be a schema and let $F_c$ be a canonical cover. The \emph{dependency complex} $K_F$ is the simplicial complex on vertex set $U$ generated by simplices
\[
\sigma(X\to A) \;=\; X\cup\{A\}
\]
for each FD $(X\to A)\in F_c$ (single-attribute right-hand side; multi-attribute RHS can be split into singletons in the canonical cover).
\end{definition}

\begin{remark}
This encoding is intentionally conservative: it records each FD as a \emph{face}. Logical implication among FDs is not assumed to be closed inside $K_F$; instead, $K_F$ is a compact combinatorial summary of the canonical cover. Higher-dimensional simplices correspond to higher-arity determinants that can ``fill'' lower-dimensional cycles.
\end{remark}

\begin{example}\label{ex:basic}
Let $U=\{A,B,C,D\}$ and $F_c=\{A\to B,\; B\to C,\; AC\to D\}$. Then $K_F$ is generated by edges $\{A,B\},\{B,C\}$ and triangle $\{A,C,D\}$. The $A$--$B$--$C$ chain is present as a 1-skeleton path; there is no 1-cycle unless we also have an edge $\{A,C\}$ without a filling 2-simplex.
\end{example}

\subsection{Simplicial Normal Form (SNF)}\label{sec:snf}

\begin{definition}[Simplicial Normal Form (SNF)]
A schema $R(U,F)$ is in \emph{Simplicial Normal Form} if the dependency complex satisfies
\[
\widetilde H_n(K_F)=0 \quad \text{for all } n\ge 1.
\]
Equivalently (over a field), all Betti numbers $b_n(K_F)=0$ for $n\ge 1$.
\end{definition}

\begin{remark}[What SNF does and does not mean]
SNF asserts \emph{no positive-dimensional homology}. It does \emph{not} claim $K_F$ is contractible, nor does it imply a particular homotopy type. SNF is best read as a \emph{homological acyclicity diagnostic} for the canonical-cover dependency pattern.
\end{remark}

\subsection{A Case Where SNF is Strictly Different from Directed-Cycle Checks}\label{sec:different}

A common quick diagnostic is to form a directed graph of FDs and look for directed cycles. SNF can disagree with this diagnostic because SNF also accounts for \emph{higher-arity} dependencies that fill cycles.

\begin{example}[Directed cycle present, but $H_1(K_F)=0$]\label{ex:difference}
Let $U=\{A,B,C\}$ and
\[
F_c=\{A\to B,\;B\to C,\;C\to A,\;AB\to C\}.
\]
The directed FD graph has a directed cycle $A\to B\to C\to A$. However, $K_F$ contains edges $\{A,B\},\{B,C\},\{C,A\}$ and also the 2-simplex $\{A,B,C\}$ coming from $AB\to C$. The triangle is therefore \emph{filled}, so $H_1(K_F)=0$ even though the directed cycle exists. In this sense SNF is strictly different: it distinguishes ``unfilled'' cyclic dependency interaction (a genuine 1-dimensional hole) from cyclic implication that is explained by a higher-arity determinant.
\end{example}

\section{Lossless Join Through a Topological Lens}\label{sec:lossless}

We now relate classical lossless join to topology. The cleanest statement is for \emph{binary} decompositions, where the classical criterion~\eqref{eq:lossless-classic} is exact.

\subsection{Induced Subcomplexes}

Given $K_F$ on vertex set $U$ and a subset $W\subseteq U$, let $K_F[W]$ be the induced subcomplex on $W$:
\[
K_F[W]=\{\sigma\in K_F:\sigma\subseteq W\}.
\]
For a binary decomposition $U=U_1\cup U_2$, define $K_1=K_F[U_1]$, $K_2=K_F[U_2]$, and $K_{12}=K_F[U_1\cap U_2]$. Then $K_1\cup K_2\subseteq K_F$ always, and when the decomposition is dependency-preserving in the natural ``projection cover'' sense, one expects $K_F$ to be generated by $K_1\cup K_2$ (at least up to canonical-cover choices).

\subsection{Binary Lossless Join = Key Intersection = Retraction}

\begin{proposition}[Topological reformulation of the binary lossless-join test]\label{prop:binary}
Let $R(U,F)$ be a schema with canonical cover $F_c$ and dependency complex $K_F$. Consider a dependency-preserving binary decomposition into $U_1,U_2$ and induced subcomplexes $K_1,K_2,K_{12}$ as above. If
\[
(U_1\cap U_2)\to U_1 \in F^+,
\]
then there exists a simplicial map (induced by repeated FD application in the closure computation) that collapses $K_1$ onto $K_{12}$ in the following sense: the inclusion $K_{12}\hookrightarrow K_1$ induces an isomorphism on reduced homology in positive degrees,
\[
\widetilde H_n(K_{12}) \cong \widetilde H_n(K_1)\quad \text{for all }n\ge 1.
\]
The symmetric statement holds if $(U_1\cap U_2)\to U_2$.
\end{proposition}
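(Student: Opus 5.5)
The natural approach is to follow the closure computation of $X=U_1\cap U_2$ and turn it into a sequence of elementary collapses. Since $X\to U_1\in F^+$, the closure algorithm run on $F_c$ (whose FDs lie inside $U_1$) produces an ordering $U_1\setminus X=\{A_1,\dots,A_m\}$ in which each $A_i$ is added by some $(Y_i\to A_i)\in F_c$ with $Y_i\subseteq X\cup\{A_1,\dots,A_{i-1}\}$. Put $W_i=X\cup\{A_1,\dots,A_i\}$ and $L_i=K_F[W_i]$, so $L_0=K_{12}$ and $L_m=K_1$. Passing from $L_{i-1}$ to $L_i$ adds the vertex $A_i$ together with its closed star, which is the cone $A_i*\mathrm{lk}_{L_i}(A_i)$. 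Mayer--Vietoris then gives $\widetilde H_n(L_{i-1})\cong\widetilde H_n(L_i)$ for all $n$ whenever $\mathrm{lk}_{L_i}(A_i)$ is acyclic. Composing the $m$ steps would yield the isomorphism induced by $K_{12}\hookrightarrow K_1$, and the FD $Y_i\to A_i$ is the intended reason for the link to be acyclic: its simplex $Y_i\cup\{A_i\}$ places $A_i$ over the face $Y_i$ of $L_{i-1}$.

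The main obstacle is this link condition, and I expect it to fail in general. The closure only guarantees that \emph{one} simplex, $Y_i\cup\{A_i\}$, joins $A_i$ to the earlier vertices. Other FDs, including ones that fire later in the closure, can attach $A_i$ along several disjoint faces, so the link need not be acyclic. In fact the statement as written appears to be false. Take $U=\{A,B,C,D,E\}$ and $F_c=\{A\to B,\;B\to C,\;C\to D,\;D\to B,\;A\to E\}$, with $U_1=\{A,B,C,D\}$ and $U_2=\{A,E\}$. The decomposition is dependency-preserving, and $X=\{A\}$ satisfies $A\to U_1\in F^+$. However, $K_1$ contains the unfilled triangle on the edges $\{B,C\},\{C,D\},\{D,B\}$, so $\widetilde H_1(K_1)\cong\mathbb{Z}\neq 0=\widetilde H_1(K_{12})$. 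Tracing the failure in the plan: in the last step, with closure order $B,C,D$, the link of $D$ is $\{B,C\}$, two points, which is not acyclic.

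I would therefore prove a corrected version in which the link condition is built into the hypothesis. One option is to require that every simplex of $K_1$ containing $A_i$ and lying in $W_i$ be contained in a simplex of the form $A_i\cup\tau$ with $\tau$ a face of a fixed cone in $L_{i-1}$; the simplest sufficient case is that $A_i$ lies in a unique maximal simplex of $L_i$. Under that hypothesis, each step is an elementary collapse, hence a strong deformation retraction $L_i\searrow L_{i-1}$. Composing these gives $K_1\searrow K_{12}$, which is stronger than the stated homology isomorphism and matches the ``retraction'' language of the section title. The symmetric case $(U_1\cap U_2)\to U_2$ is identical, with $U_2$ in place of $U_1$.
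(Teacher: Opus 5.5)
Your counterexample is valid. The proposition is false as stated, so refuting it is the correct outcome. Check the pieces:
\begin{itemize}
\item Your $F_c=\{A\to B,\,B\to C,\,C\to D,\,D\to B,\,A\to E\}$ is a canonical cover: no FD is redundant, and single-attribute left-hand sides cannot have extraneous attributes.
\item The decomposition $\{A,B,C,D\},\{A,E\}$ is dependency-preserving; indeed $K_F=K_1\cup K_2$.
\item $A^+=U$, so the key hypothesis holds.
\item Yet $K_1$ is a hollow triangle with a pendant edge and $K_{12}$ is a point, so $\widetilde H_1(K_1)\cong\mathbb{Z}\neq 0=\widetilde H_1(K_{12})$.
\end{itemize}
The example is also robust to the choice of cover. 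Yours uses the minimum possible five FDs. Any five-FD single-RHS minimal cover of this $F$ must realize the class $\{B,C,D\}$ by a directed $3$-cycle, so the unfilled triangle cannot be avoided.

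The paper's proof is essentially the plan in your first paragraph. It asserts that the closure computation gives a ``flow'' toward $U_1\cap U_2$ ``and thus a sequence of elementary collapses,'' with no free-face or link check. Your computation shows where that inference breaks: the link of $D$ in $K_F[\{A,B,C,D\}]$ is two points. The key condition $X\to U_1$ simply does not control the topology of $K_1$.

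Three smaller points about your repair.

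First, your parenthetical that the closure of $X$ can be run with FDs of $F_c$ lying inside $U_1$ is false in general, even under dependency preservation. Take $F_c=\{A\to E,\,E\to A,\,E\to B\}$ with $U_1=\{A,B\}$ and $U_2=\{A,E\}$. The decomposition is dependency-preserving and $A\to U_1\in F^+$. But $B$ is derivable only through $E\notin U_1$, and $K_1$ is two isolated points. So an ordering with $(Y_i\to A_i)\in F_c$ and $Y_i\subseteq W_{i-1}$ need not exist. This does not actually hurt your repair, because your corrected hypothesis never uses the FDs. What you prove is a purely combinatorial fact: if $U_1\setminus X$ can be ordered so that each $A_i$ lies in a unique maximal simplex $\sigma_i$ of $K_F[W_i]$, then positive-degree homology is unchanged. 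Moreover $K_1\searrow K_{12}$ provided no $\sigma_i$ is the isolated vertex $\{A_i\}$.

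Second, in that case each step is a sequence of elementary collapses ($\sigma_i$ collapses onto its facet $\sigma_i\setminus\{A_i\}$), not a single one.

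Third, your first ``fixed cone'' variant is not sufficient as worded. Requiring every simplex through $A_i$ to lie in some $A_i\cup\tau$, with $\tau$ a face of a fixed cone $C\subseteq L_{i-1}$, only makes the link a subcomplex of $C$. A subcomplex of a cone can be two points: take $C$ the path $B$--$c$--$D$ and link $\{B,D\}$. For a collapse you need the link itself to be a cone (or collapsible); for the homology isomorphism you need it acyclic.
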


\begin{proof}
Assume $(U_1\cap U_2)\to U_1$. Then every attribute in $U_1$ is functionally determined by the intersection. In closure terms, starting from $U_1\cap U_2$, iterative application of FDs reaches all of $U_1$. This provides a combinatorial ``flow'' from vertices in $U_1$ to the intersection, and thus a sequence of elementary collapses on the induced complex consistent with the canonical cover. Homologically, such collapses preserve reduced homology in degrees $\ge 1$, yielding the claimed isomorphism. (This argument is standard: elementary simplicial collapses are homotopy equivalences and hence preserve homology \cite{hatcher2002}.)
\end{proof}

\begin{theorem}[Homological Lossless Join Theorem (binary case)]\label{thm:HLJT-binary}
Let $R(U,F)$ be a schema and let $U=U_1\cup U_2$ be a dependency-preserving decomposition. Then the decomposition is lossless (w.r.t.\ $F$) if and only if the intersection attributes form a key for at least one side, i.e.,
\[
(U_1\cap U_2)\to U_1\in F^+ \quad\text{or}\quad (U_1\cap U_2)\to U_2\in F^+.
\]
Moreover, when the key condition holds, the Mayer--Vietoris connecting morphism
\[
\partial:\widetilde H_1(K_F)\to \widetilde H_0(K_{12})
\]
is trivial on the portion of $\widetilde H_1$ represented inside the keyed component (intuitively: the keyed component cannot contribute an ``unexplained'' 1-cycle across the cover).
\end{theorem}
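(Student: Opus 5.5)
The plan is to split the theorem into its classical relational part and its topological addendum, and treat them separately. For the equivalence, dependency preservation plays no role. I would prove the criterion~\eqref{eq:lossless-classic} directly, since the paper only cites it.

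\emph{Sufficiency.} Suppose $(U_1\cap U_2)\to U_1\in F^+$, and let $r$ satisfy $F$. The inclusion $r\subseteq \pi_{U_1}(r)\Join\pi_{U_2}(r)$ always holds. For the reverse inclusion, take a joined tuple $t$ built from $t_1\in r$ and $t_2\in r$. These agree on $U_1\cap U_2$, so by the FD they agree on $U_1$. Since $t$ coincides with $t_2$ on $U_2$ and with $t_1$ (hence $t_2$) on $U_1$, we get $t=t_2\in r$.

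\emph{Necessity.} I would run the two-row tableau chase for $\{U_1,U_2\}$ under $F$. Row $i$ has distinguished symbols exactly on $U_i$; the two rows share symbols on $U_1\cap U_2$ and on $(U_1\cap U_2)^+$ as the chase proceeds. The standard fact is that the chase yields an all-distinguished row iff $U_1\subseteq (U_1\cap U_2)^+$ or $U_2\subseteq (U_1\cap U_2)^+$. If neither holds, the terminal chased tableau, read as an instance, satisfies $F$. Its join contains the all-distinguished tuple, which is not in the instance. That instance is the counterexample, and it proves the ``only if'' direction.

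\emph{The Mayer--Vietoris claim.} To apply the sequence to the pair $(K_1,K_2)$, I first need $K_F=K_1\cup K_2$ as complexes. This is where dependency preservation enters. I would read it in the ``projection cover'' sense of the preceding subsection: each generator $X\cup\{A\}$ of $K_F$ lies inside $U_1$ or inside $U_2$, so every simplex of $K_F$ lies in $K_1$ or $K_2$. Also $K_1\cap K_2=K_{12}$ holds automatically, because induced subcomplexes satisfy $K_F[U_1]\cap K_F[U_2]=K_F[U_1\cap U_2]$. This gives the reduced sequence
\[
\cdots\to \widetilde H_1(K_1)\oplus\widetilde H_1(K_2)\to\widetilde H_1(K_F)\xrightarrow{\partial}\widetilde H_0(K_{12})\to\cdots.
\]
The connecting map is computed as follows. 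Write a $1$-cycle as $z=c_1+c_2$ with $c_i$ a chain in $K_i$. Then $\partial[z]=[\partial c_1]=-[\partial c_2]$, a $0$-cycle supported in $K_{12}$.

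For a class represented inside the keyed component $K_1$, take $c_1=z$ and $c_2=0$. Then $\partial c_1=0$, so $\partial[z]=0$. Equivalently, such classes lie in the image of $\widetilde H_1(K_1)$, which is $\ker\partial$ by exactness. To connect this with the key hypothesis, I would invoke Proposition~\ref{prop:binary}: classes of $\widetilde H_1(K_1)$ already come from $\widetilde H_1(K_{12})$. Hence every such class can be represented inside the intersection itself, so the keyed side carries no $1$-cycle that essentially crosses the cover.

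\emph{Main obstacle.} The hard step is making ``$K_F=K_1\cup K_2$'' follow rigorously from dependency preservation. Preservation is a statement about closures, whereas $K_F$ is built from one particular canonical cover. An FD of $F_c$ could straddle both sides and still be implied by the projected dependencies. I would handle this by choosing $F_c$ to be a canonical cover of $\bigcup_i F_i$, which is allowed because the two sets have the same closure. Every generator then sits inside some $U_i$, and the Mayer--Vietoris argument goes through for that choice of cover.
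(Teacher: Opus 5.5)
Your proposal is correct for the statement as literally posed, and it differs from the paper's proof in ways that make it more reliable. The paper only cites the classical binary criterion. You prove it, and both halves are right: the tuple argument gives sufficiency, and the two-row chase gives necessity, since its terminal tableau agrees exactly on the closed set $(U_1\cap U_2)^+$ and therefore satisfies $F$. You are also right that dependency preservation plays no role there.

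For the Mayer--Vietoris clause, the paper assumes $K_F=K_1\cup K_2$ without comment. It then gets the vanishing of $\partial$ from Proposition~\ref{prop:binary} (``keyed-side homology maps through $\widetilde H_1(K_{12})$''). You instead justify the cover identity and read the vanishing off $\partial[z]=[\partial c_1]$ with $c_2=0$, i.e.\ from exactness at $\widetilde H_1(K_F)$. One small correction: the image of $\widetilde H_1(K_1)$ is contained in $\ker\partial$; equality holds only for the image of the direct sum. This direct argument is all the clause needs.

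The sentence where you ``connect this with the key hypothesis'' should be deleted, because Proposition~\ref{prop:binary} is false. Take $U=\{A,B,C,D\}$, $F_c=\{A\to B,\,B\to C,\,C\to A,\,A\to D\}$, $U_1=\{A,B,C\}$, $U_2=\{A,D\}$.
\begin{itemize}
\item This $F_c$ is canonical and lies inside $F_1\cup F_2$, so your reduction of $F_1\cup F_2$ can produce it.
\item The decomposition is dependency-preserving, and $A^+=U$, so the key condition holds.
\item Yet $K_1$ is a hollow triangle with $\widetilde H_1(K_1)\cong\mathbb{Z}$, while $K_{12}$ is the single vertex $A$.
\end{itemize}
So the triangle's class is not representable in the intersection: functional determination does not produce simplicial collapses. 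Your exactness argument shows why nothing is lost. Every class coming from $K_1$ or $K_2$ is killed by $\partial$ whether or not either side is keyed, so the key hypothesis is idle in the ``moreover'' clause. Say that explicitly rather than attributing content to it via Proposition~\ref{prop:binary}. The paper's own proof rests on exactly that invalid step.

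Also make the choice of cover precise. $F_c$ must be obtained by running left-reduction and redundancy removal on $F_1\cup F_2$ itself, since those operations never leave a $U_i$. An arbitrary minimal cover with the same closure can straddle the two sides. Take $F_c=\{A\to B,\,B\to A,\,A\to C\}$ with $U_1=\{A,B\}$, $U_2=\{B,C\}$. This is dependency-preserving, since $F_1\cup F_2$ contains $A\to B$, $B\to A$, $B\to C$. But the edge $\{A,C\}$ lies in neither $K_1$ nor $K_2$, so $K_1\cup K_2\neq K_F$ and the connecting map in the statement is not defined. Your cover choice is therefore a restriction under which the clause makes sense, not a proof for every canonical cover.
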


\begin{proof}
The equivalence of losslessness and the key condition is the classical result \cite{aho1979,maier1983}. For the homological statement, under the key condition Proposition~\ref{prop:binary} ensures that (up to homology in degrees $\ge 1$) the keyed side contributes no additional positive-dimensional homology beyond what is already present in $K_{12}$. In the Mayer--Vietoris long exact sequence for $K_F=K_1\cup K_2$,
\[
\cdots \to \widetilde H_1(K_{12}) \to \widetilde H_1(K_1)\oplus \widetilde H_1(K_2)\to \widetilde H_1(K_F)\xrightarrow{\partial}\widetilde H_0(K_{12})\to \cdots,
\]
the keyed-side homology maps through $\widetilde H_1(K_{12})$ and hence does not create a nontrivial boundary contribution across the cover.
\end{proof}

\begin{remark}[Scope]
Theorem~\ref{thm:HLJT-binary} is best viewed as a \emph{topological packaging} of the exact classical test. The value is not a new lossless criterion for $k=2$, but a bridge to multiway covers and to computational diagnostics (Section~\ref{sec:alg}).
\end{remark}

\subsection{Multiway Decompositions and the Nerve Obstruction}

For $k\ge 3$, losslessness is typically analyzed via chase or join-dependency characterizations \cite{aho1979,maier1983}. Topologically, a natural object is the \emph{nerve} of the cover of $K_F$ by induced subcomplexes.

\begin{definition}[Nerve of a decomposition cover]
Let $\{U_i\}_{i=1}^k$ be a decomposition cover of $U$ and let $K_i=K_F[U_i]$. The \emph{nerve} $\mathcal{N}$ is the simplicial complex on vertices $\{1,\dots,k\}$ where $\{i_0,\dots,i_p\}$ is a simplex iff
\[
K_{i_0}\cap\cdots\cap K_{i_p}\neq \emptyset.
\]
\end{definition}

\begin{proposition}[Cycle in the nerve is an obstruction to join-tree structure]\label{prop:nerve}
If $H_1(\mathcal{N})\neq 0$, then the cover cannot have a join-tree (tree-shaped nerve). In particular, this certifies the presence of cyclic cover-interaction (a structural hallmark of cyclic join behavior).
\end{proposition}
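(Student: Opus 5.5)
We prove the contrapositive: if the cover $\{U_i\}_{i=1}^k$ admits a join tree, then $H_1(\mathcal{N})=0$. The first step is to fix what ``join tree'' means, since the phrase ``tree-shaped nerve'' is informal. In the naive reading, the nerve itself is a tree, i.e.\ a $1$-dimensional complex with no cycles. Then $H_1(\mathcal{N})=0$ immediately, because a tree is contractible and its cycle space is trivial.

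The meaningful reading is the acyclic-scheme one. Here a join tree is a tree $T$ on $\{1,\dots,k\}$ with the running-intersection property: for all $i,j$, the set $U_i\cap U_j$ is contained in every $U_\ell$ on the $T$-path from $i$ to $j$. I would prove the statement for this reading, so that it covers the intended application. I also make explicit a standing convention: every attribute of $U$ occurs in some FD of $F_c$, so it is a vertex of $K_F$. Under this convention, for any indices $i_0,\dots,i_p$,
\[
K_{i_0}\cap\cdots\cap K_{i_p}=K_F[U_{i_0}\cap\cdots\cap U_{i_p}],
\]
and this is nonempty iff $U_{i_0}\cap\cdots\cap U_{i_p}\neq\emptyset$. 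Thus $\mathcal{N}$ is exactly the nerve of the hypergraph $\{U_i\}$.

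The key step is a leaf-removal argument. Let $i$ be a leaf of $T$ with neighbour $j$. Running intersection gives
\[
U_i\cap\bigcup_{\ell\neq i}U_\ell\subseteq U_j.
\]
Take any simplex $\sigma\ni i$ of $\mathcal{N}$ with $\sigma\neq\{i\}$. Its members share a common attribute, which lies in $U_i$ and in some other $U_\ell$, hence in $U_j$. Therefore $\sigma\cup\{j\}\in\mathcal{N}$. So the link of $i$ is a cone with apex $j$, and in particular contractible. Equivalently, $i$ is dominated by $j$. Deleting $i$ is therefore a sequence of elementary collapses, which is a homotopy equivalence. As in the proof of Proposition~\ref{prop:binary}, collapses preserve homology. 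The new nerve is the nerve of the cover $\{U_\ell\}_{\ell\neq i}$, and $T\setminus\{i\}$ is again a join tree for it.

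Induction on $k$ then reduces $\mathcal{N}$ to a single vertex, so $\widetilde H_*(\mathcal{N})=0$ and in particular $H_1(\mathcal{N})=0$. If the scheme is disconnected, I apply the argument to a join forest one component at a time. Each component of $\mathcal{N}$ is then contractible, so $H_1$ still vanishes. The contrapositive gives the proposition, and the ``in particular'' clause is just the interpretation that a cycle in the nerve signals cyclic cover interaction.

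I expect the main obstacle to be the definitional step rather than the topology. Two gaps must be closed: the nerve condition on the $K_i$ must be identified with nonemptiness of attribute intersections, which fails if some attribute appears in no FD, and the informal phrase ``tree-shaped'' must be pinned down as running intersection. I would handle the first by adding the vertex-set convention above as an explicit hypothesis. For the second, I would state both readings and note that the naive one is trivial.
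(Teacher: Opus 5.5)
Your proof is correct. Your treatment of the naive reading is exactly the paper's entire proof: the paper asserts that a join-tree cover has a tree as nerve, so $H_1(\mathcal{N})=0$, and takes the contrapositive.

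Your leaf-removal argument is a genuinely different and stronger route. Under the running-intersection definition, which the paper itself adopts in its join-planning section, the paper's assertion that the nerve is a tree is false. For example, $U_1=\{A,B\}$, $U_2=\{A,C\}$, $U_3=\{A,D\}$ admit the path $1$--$2$--$3$ as a join tree. Yet their nerve is a full $2$-simplex whose $1$-skeleton is a triangle. So the paper's proof is valid only if ``join tree'' is read tautologically as ``tree-shaped nerve.''

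Your domination step shows that a leaf $i$ is dominated by its tree neighbour $j$, so $\mathcal{N}\searrow\mathcal{N}\setminus i$. This proves the substantive fact: the nerve of any cover with a join tree collapses to a point on each component. That is the version that actually ties the proposition to acyclic-scheme theory, at the price of some extra bookkeeping.

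Two small refinements. First, your vertex convention is dispensable. If $V$ is the vertex set of $K_F$, then $\mathcal{N}$ is the nerve of $\{U_i\cap V\}$. Running intersection passes to this family, since $(U_i\cap U_j)\cap V\subseteq U_\ell\cap V$, so no extra hypothesis is needed.

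Second, the cone claim needs the link of $i$ to be nonempty, i.e.\ $U_i\cap U_j\neq\emptyset$. When it is empty, $i$ is an isolated vertex, and deleting it changes only $H_0$. Noting this lets the induction run directly for $H_1$. You then need not justify that $T$ restricts to a join forest on the components of $\mathcal{N}$, which is true but requires its own argument.
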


\begin{proof}
A join-tree cover has a nerve that is a tree (contractible 1-skeleton), hence $H_1(\mathcal{N})=0$. Contraposition yields the claim.
\end{proof}

\begin{remark}[Where this can be ``sharper'']
For $k\ge 3$, many practical heuristics check only pairwise intersections (or local key tests) and miss global cyclic interaction. The nerve homology detects \emph{global} cover cycles even when each pairwise intersection is nonempty. This does not replace the chase (which is exact), but it provides a fast structural warning that a join-tree style plan or acyclic-scheme properties cannot hold.
\end{remark}

\section{Algorithmic Consequences}\label{sec:alg}

\subsection{Computing Betti Numbers as a Schema Diagnostic}

Given $K_F$, compute boundary matrices $\partial_n$ and obtain Betti numbers by rank computations over a field:
\[
b_n = \dim \ker \partial_n - \dim \mathrm{im}\,\partial_{n+1}.
\]
In practice, one typically computes only $b_1$ (and sometimes $b_2$) since these capture the first obstructions to acyclicity. This yields:
\begin{itemize}
  \item \textbf{Cycle localization:} 1-cycles correspond to ``unfilled'' cyclic interactions among FDs in the canonical cover; these are candidates for redesign or decomposition.
  \item \textbf{Cover obstruction:} for a decomposition $\{U_i\}$, compute $H_1(\mathcal{N})$ of the nerve: $H_1(\mathcal{N})\neq 0$ is a fast certificate that the cover is not tree-like, hence one should not expect acyclic-scheme join simplifications.
\end{itemize}

\subsection{A Minimal Illustrative Diagnostic}

\begin{example}[Nerve cycle warning]
Let the decomposition have three components with pairwise overlaps, e.g.\ $U_1\cap U_2\neq\emptyset$, $U_2\cap U_3\neq\emptyset$, $U_3\cap U_1\neq\emptyset$, but empty triple overlap. Then the nerve $\mathcal{N}$ is (combinatorially) the 1-skeleton triangle, so
\[
H_1(\mathcal{N})\cong \mathbb{Z}
\]
is nontrivial. Figure~\ref{fig:nerve-cycle} depicts the situation: panel (a) shows the cover with all pairwise intersections but no triple overlap, and panel (b) shows the resulting cyclic nerve. This flags a global cycle in the cover-interaction even though all pairwise overlaps exist; in database terms, it is exactly the pattern that prevents a join-tree representation.
\end{example}
\FloatBarrier

\begin{figure}[!t]
\centering
\begin{minipage}[t]{0.48\textwidth}
\centering
\begin{tikzpicture}[scale=0.95]
  \draw[thick] (-0.5,0) circle (1.2);
  \draw[thick] (0.5,0) circle (1.2);
  \draw[thick] (0,0.8) circle (1.2);

  \node[fill=white,inner sep=1pt,yshift=-2pt] at (-1.55,-1.15) {$U_1$};
  \node[fill=white,inner sep=1pt,xshift=4pt,yshift=-2pt] at (1.45,-1.05) {$U_2$};
  \node[fill=white,inner sep=1pt,yshift=4pt] at (0,2.1) {$U_3$};

  \node[fill=white,inner sep=1pt] at (0,-0.15) {$U_1\cap U_2$};
  \node[fill=white,inner sep=1pt] at (-0.6,0.55) {$U_1\cap U_3$};
  \node[fill=white,inner sep=1pt] at (0.6,0.55) {$U_2\cap U_3$};

  \node[fill=white,inner sep=1pt] at (0,0.3) {$\emptyset$};
\end{tikzpicture}

\vspace{0.6em}
{\small (a) Decomposition cover with empty triple overlap.\strut}
\end{minipage}\hfill
\begin{minipage}[t]{0.48\textwidth}
\centering
\begin{tikzpicture}[scale=1]
  \node[circle,draw,thick,minimum size=16pt] (1) at (0,0) {$1$};
  \node[circle,draw,thick,minimum size=16pt] (2) at (2,0) {$2$};
  \node[circle,draw,thick,minimum size=16pt] (3) at (1,1.7) {$3$};

  \draw[thick] (1) -- (2) -- (3) -- (1);
\end{tikzpicture}

\vspace{0.6em}
{\small (b) Nerve $\mathcal{N}$ (triangle), so $H_1(\mathcal{N})\neq 0$.\strut}
\end{minipage}

\caption{A three-way decomposition whose nerve has a 1-cycle: all pairwise overlaps exist, but the triple overlap is empty, yielding a cyclic (non-tree) nerve.}
\label{fig:nerve-cycle}
\end{figure}
\FloatBarrier

What the figure adds beyond the verbal description is a clean algebraic certificate: the induced nerve has a nontrivial 1-cycle, so
\[
\widetilde H_1(\mathcal{N})\cong \mathbb{Z}\neq 0.
\]
This excludes the existence of any join tree for the cover (a tree-shaped nerve would force $\widetilde H_1(\mathcal{N})=0$), and it is precisely this failure of tree-likeness that underlies cyclic join behavior.

\section{A Topological View of Join Planning (Optional Perspective)}

Acyclic database schemes admit particularly efficient join evaluation and join ordering (e.g., via join trees) \cite{beeri1983,yannakakis1981}. Our contribution here is interpretive: when a decomposition cover has tree-like nerve (hence $H_1(\mathcal{N})=0$), one can evaluate joins along a corresponding join tree; when $H_1(\mathcal{N})\neq 0$, global cycles indicate multiple competing join paths and a larger optimization search space.

\subsection{A Worked Example: From Query Hypergraph to an Optimized Join Tree}

Consider the natural join query
\[
Q \,=\, R_1(A,B)\Join R_2(B,C)\Join R_3(C,D)\Join R_4(B,E)\Join R_5(E,F)\Join R_6(D,F).
\]
Its query hypergraph has vertices $\{A,B,C,D,E,F\}$ and hyperedges given by the attribute sets of the relations. A join tree is a tree whose nodes are the relations and where, for every attribute $X$, the set of relations containing $X$ induces a connected subtree (running-intersection property). Figure~\ref{fig:join-tree} shows one such join tree.

A standard evaluation strategy on an acyclic join tree is a two-pass semijoin reduction (Yannakakis-style) followed by a final join. Along an edge $R_i$--$R_j$ with separator $S=\mathrm{attrs}(R_i)\cap\mathrm{attrs}(R_j)$, define the semijoin
\[
R_i \ltimes_S R_j \;:=\; \{t\in R_i : \pi_S(t)\in \pi_S(R_j)\}.
\]
An optimized plan performs a bottom-up pass that replaces each relation by semijoin filters pushed from its neighbors (reducing intermediate sizes), then executes joins in a root-directed order.

\begin{example}[An explicit optimized join-tree plan]
Root the join tree at $R_2(B,C)$. In a bottom-up semijoin pass one may compute
\[
\begin{aligned}
R_1' &= R_1 \ltimes_{\{B\}} R_2, \\
R_4' &= R_4 \ltimes_{\{B\}} R_2, \\
R_3' &= R_3 \ltimes_{\{C\}} R_2, \\
R_5' &= R_5 \ltimes_{\{E\}} R_4', \\
R_6' &= R_6 \ltimes_{\{D\}} R_3' \;\ltimes_{\{F\}} R_5'.
\end{aligned}
\]
A subsequent join phase can then follow the tree structure, for instance
\[
Q \,=\, ((((R_1'\Join R_2)\Join R_3')\Join R_6')\Join R_5')\Join R_4'.
\]
Even with a simple cardinality model, pushing semijoins along separators (small intersections such as $\{B\}$, $\{C\}$, $\{E\}$, $\{D\}$, $\{F\}$) can substantially reduce the sizes of intermediate results compared to a naive left-deep plan.
\end{example}

\begin{figure}[!htbp]
\centering
\begin{tikzpicture}[
  node distance=14mm and 20mm,
  rel/.style={rectangle,draw,rounded corners,thick,inner sep=3pt,align=center},
  edge/.style={thick}
]
  \node[rel] (R2) {\begin{tabular}{c}$R_2$\\$(B,C)$\end{tabular}};
  \node[rel,above left=of R2] (R1) {\begin{tabular}{c}$R_1$\\$(A,B)$\end{tabular}};
  \node[rel,above right=of R2] (R3) {\begin{tabular}{c}$R_3$\\$(C,D)$\end{tabular}};
  \node[rel,below left=of R2] (R4) {\begin{tabular}{c}$R_4$\\$(B,E)$\end{tabular}};
  \node[rel,below right=of R4] (R5) {\begin{tabular}{c}$R_5$\\$(E,F)$\end{tabular}};
  \node[rel,below right=of R3] (R6) {\begin{tabular}{c}$R_6$\\$(D,F)$\end{tabular}};

  \draw[edge] (R1) -- node[midway,fill=white,inner sep=1pt] {$\{B\}$} (R2);
  \draw[edge] (R3) -- node[midway,fill=white,inner sep=1pt] {$\{C\}$} (R2);
  \draw[edge] (R4) -- node[midway,fill=white,inner sep=1pt] {$\{B\}$} (R2);
  \draw[edge] (R5) -- node[midway,fill=white,inner sep=1pt] {$\{E\}$} (R4);
  \draw[edge] (R6) -- node[midway,fill=white,inner sep=1pt] {$\{D\}$} (R3);
  \draw[edge] (R6) -- node[midway,fill=white,inner sep=1pt] {$\{F\}$} (R5);
\end{tikzpicture}
\caption{A join tree for the query $Q=\Join_{i=1}^6 R_i$ with edge labels denoting separators (shared attributes). Acyclicity (tree-shaped structure satisfying the running-intersection property) supports semijoin-based reduction and an optimized evaluation order along the tree.}
\label{fig:join-tree}
\end{figure}
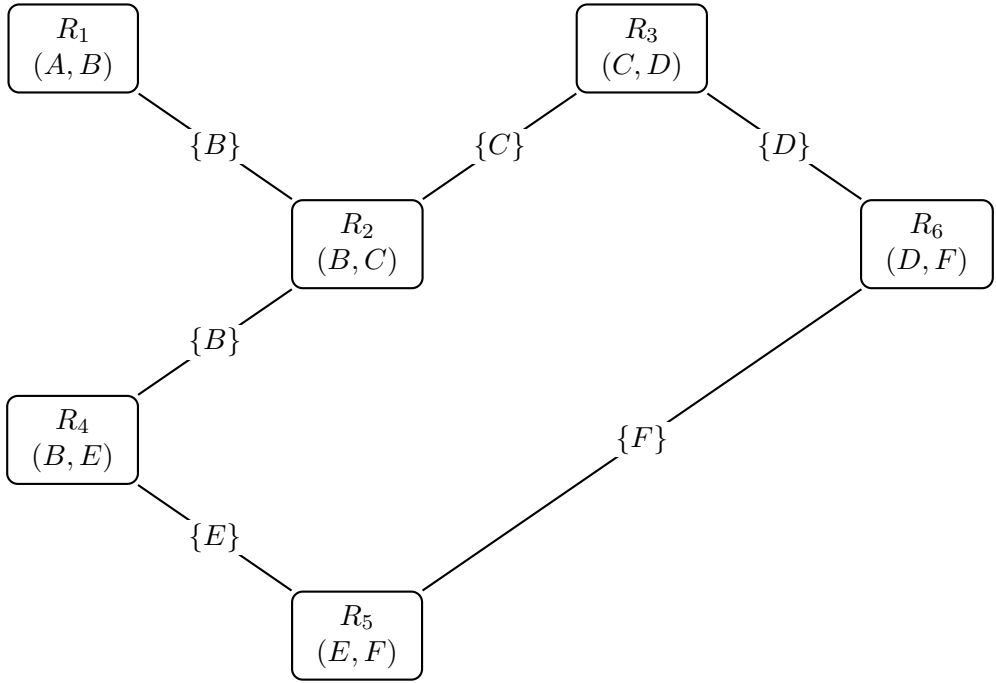
\FloatBarrier

In Fig.~\ref{fig:join-tree}, the edge labels are the separator sets $S_e$ that the semijoin operators in the example act on (e.g., $\{B\}$ between $R_1$ and $R_2$, $\{F\}$ between $R_5$ and $R_6$). This is the practical point of drawing the tree: it makes the interfaces explicit, so semijoin filters can be propagated along the edges and the final joins can be carried out in a root-directed order that respects the same structure.

\section{Conclusion}

We presented a simplicial-complex encoding of functional dependencies and used homological invariants as diagnostics for cyclic dependency structure. Simplicial Normal Form (SNF) was defined as vanishing reduced homology in positive degrees---a homological acyclicity condition intentionally separated from homotopy claims. For binary decompositions, we gave a topological reformulation of the classical lossless-join criterion and connected it to Mayer--Vietoris structure. For multiway decompositions, nerve homology provides a lightweight obstruction to join-tree structure, offering an algorithmic warning signal complementary to exact chase-based tests. These tools aim to bridge relational design intuition with computable topological invariants.



\begin{thebibliography}{99}

\bibitem{codd1970}
E.~F. Codd.
\newblock A relational model of data for large shared data banks.
\newblock \emph{Communications of the ACM}, 13(6):377--387, 1970.

\bibitem{armstrong1974}
W.~W. Armstrong.
\newblock Dependency structures of data base relationships.
\newblock In \emph{Information Processing 74}, pp. 580--583, 1974.

\bibitem{maier1983}
D.~Maier.
\newblock \emph{The Theory of Relational Databases}.
\newblock Computer Science Press, 1983.

\bibitem{ullman1988}
J.~D. Ullman.
\newblock \emph{Principles of Database and Knowledge-Base Systems}, Vol.~1.
\newblock Computer Science Press, 1988.

\bibitem{beeri1979}
C.~Beeri and P.~A. Bernstein.
\newblock Computational problems related to the design of normal form relational schemas.
\newblock \emph{ACM Transactions on Database Systems}, 4(1):30--59, 1979.

\bibitem{aho1979}
A.~V. Aho, C.~Beeri, and J.~D. Ullman.
\newblock The theory of joins in relational databases.
\newblock \emph{ACM Transactions on Database Systems}, 4(3):297--314, 1979.

\bibitem{beeri1983}
C.~Beeri, R.~Fagin, D.~Maier, and M.~Yannakakis.
\newblock On the desirability of acyclic database schemes.
\newblock \emph{Journal of the ACM}, 30(3):479--513, 1983.

\bibitem{yannakakis1981}
M.~Yannakakis.
\newblock Algorithms for acyclic database schemes.
\newblock In \emph{Proceedings of VLDB}, pp. 82--94, 1981.

\bibitem{hatcher2002}
A.~Hatcher.
\newblock \emph{Algebraic Topology}.
\newblock Cambridge University Press, 2002.

\end{thebibliography}
\end{document}